\documentclass[a4paper, 12pt, oneside]{article}
\usepackage{amsmath}
\usepackage{amssymb}
\usepackage{enumerate}
\usepackage{graphicx}
\usepackage{epsf}
\usepackage{verbatim}
\usepackage{tikz}
\usepackage{systeme,mathtools}
\usetikzlibrary{matrix,arrows,decorations.pathmorphing}

\usepackage[amsmath,thmmarks]{ntheorem}
\newcounter{thm} \setcounter{thm}{0}

\newtheorem{Thm}[thm]{Theorem}

\newtheorem{Lem}[thm]{Lemma}

\setlength{\parindent}{4ex}
\setlength{\parskip}{2ex}
\addtolength{\hoffset}{-1cm}
\addtolength{\textwidth}{2cm}

\setcounter{equation}{0}

\theoremstyle{nonumberplain}
\theoremheaderfont{\normalfont\itshape}
\theorembodyfont{\normalfont}
\theoremseparator{.}
\theoremsymbol{\ensuremath{\square}}
\newtheorem{proof}{Proof}

\def \C {\mathbb C}

\def \R {\mathbb R}

\def \i {\mathfrak{i}}

\def \MM {\overline{\mathbf M}}

\def \d {\mathrm{d}}
\def \tr {\mathrm{Tr}}

\def \D {\mathcal{D}}

\def \HH {\mathcal H}
\def \I {\mathcal I}

\def \M {\mathcal{M}}

\def \S {\mathcal{S}}

\def \ZZ {\mathcal{Z}}

\def \oZ {\overline{Z}}

\def \oS {\overline{S}}
\def \Wa {\widehat{\alpha}}

\def \id {\operatorname{id}}
\def \trr {\operatorname{tr}}

\begin{document}
	\begin{titlepage}
		
		\title{The identification of the extended refined open partition function and the Kontsevich-Penner matrix model} 
		\author{Gehao Wang}
		\date{}
		\maketitle

		\begin{abstract}
			The open intersection theory has been initiated by R. Pandharipande, J. P. Solomon and R. J. Tessler. In the scope of matrix model theory, A. Buryak and R. J. Tessler have constructed a matrix model $\ZZ^o$ for the open partition function based on a Kontsevich type combinatorial formula for the open intersection numbers found by R. J. Tessler. In this paper, using the Harish-Chandra-Itzykson-Zuber formula and operational calculus, we transform $\ZZ^o$ into another simple form, and define the matrix model $\ZZ_N^{o,ext,s}$ for the extended refined open partition function from it. The expression of $\ZZ_N^{o,ext,s}$ will immediately lead us to the Kontsevich-Penner matrix model $Z_N$ under the Miwa parametrization $s_i=2^ii!\trr \Lambda^{-2i-2}$. Hence it confirms the identification between the two models for general $N\geq 1$.
		\end{abstract}
		\vspace{20pt}
		\noindent
		{\bf Keywords:} matrix models, intersection numbers, enumerative geometry.
		
		\noindent
		{\bf MSC(2020):} Primary 81R10, 81R12, 81T32, 14H70; Secondary 17B68.
		
	\end{titlepage}

\section{Introduction}

The study of the intersection theory on the moduli space of Riemann surfaces with boundary, usually referred as the open intersection theory, was initiated by R. Pandharipande, J. P. Solomon and R. J. Tessler in \cite{PST}. They gave a complete description of the genus zero gravitational descendents, and conjectured that the generating function of the open intersection numbers satisfies the open KdV hierarchy. The descriptions of higher genus descendents was found by J. P. Solomon and R. J. Tessler in \cite{ST}. R. Dijkgraaf and E. Witten provided an interpretation of these constructions from the physical point of view in \cite{DW}. Later on, a combinatorial formula for the open intersection numbers using Feynmann diagrams was given in \cite{T}. This can be considered as a generalization of Kontsevich's combinatorial formula in the proof \cite{K} of Witten's conjecture \cite{WE}.  Recently, in some parallel theories regarding the Riemann surfaces, generating functions and integrable hierachies, the open versions have also been constructed, such as the open $r$-spin theory (\cite{BCT1},\cite{BCT2}), and open Hurwitz numbers (\cite{BTT}).

In the theory of the matrix model, it has been confirmed that the open intersection number is closely related to the following Kontsevich-Penner model
 \footnote{
 	Performing the change $H\rightarrow H-\Lambda$ on the integral will give us the Gaussian  integral form of the model. In our context, the expression \eqref{ZN} is better suited for starting our later proofs. 
 },
\begin{multline}\label{ZN}
	Z_N=c_{\Lambda,M}\exp\left(-\frac{1}{3}\trr\Lambda^3\right)\int_{\HH_{M}}[\d H]\,\exp\left( \frac{1}{6}\trr H^3\right)\exp\left( -\frac{1}{2}\trr H\Lambda^2\right)
	\frac{\det\Lambda^N}{\det(-H)^N},
\end{multline}
where $\HH_{M}$ denotes the space of Hermitian matrices of size $M$,  and, for the diagonal matrix $\Lambda=\operatorname{diag}(\lambda_1,\lambda_2,\dots,\lambda_M)$,
\begin{equation*}
	c_{\Lambda,M}=(2\pi)^{-\frac{M^2}{2}}\prod_{i=1}^M\sqrt{\lambda_i}\prod_{1\leq i< j\leq M}(\lambda_i+\lambda_j).
\end{equation*}
When $N=0$, it recovers the Kontsevich's integral. For $N=1$, A. Alexandrov proved that the matrix model $Z_1$ governs the open intersection numbers (\cite{AO1},\cite{AO2}), using Buryak's residue theorem in \cite{B2}, and it corresponds to the extended open partition function $\tau^{o,ext}$ introduced in \cite{B1} and \cite{B2} after the variable change
\begin{equation}\label{stol}
	s_i=s_i(\Lambda)=2^ii!\trr \Lambda^{-2i-2},\quad i\geq 0.
\end{equation}
In \cite{BR}, this Kontsevich-Penner model has been identified with an isomonodromic tau function, in order to give explicit expressions of the correlators.

On the other hand, using the combinatorial formula in \cite{T}, the authors in \cite{BT} constructed a matrix model $\ZZ^o$ for the open partition function $\tau^o$. This matrix model was revisited in \cite{ABT}, where the authors presented a matrix model $\ZZ_1^{o,ext,s}$ for the extended open partition function $\tau^{o,ext}$. To be specific, let
\begin{equation}\label{Dz}
	\D(z,\overline{z})=\exp\left(\frac{1}{6}z^3\right)\det\frac{H+\sqrt{\Lambda^2-\overline{z}\id_M} -z\id_M}{H+\sqrt{\Lambda^2-\overline{z}\id_M}+z\id_M},
\end{equation}
where $\id_M$ represents the identity matrix of size $M$. The model $\ZZ^o$ can be written  using the complex Gaussian integral as
\begin{multline}\label{Zo1}
	\ZZ^o=c_{\Lambda,M}\exp\left(-\frac{1}{3}\trr\Lambda^3\right)\int_{\HH_{M}}[\d H]\,\exp\left( \frac{1}{6}\trr H^3\right)\exp\left( -\frac{1}{2}\trr H\Lambda^2\right)\\
	\frac{1}{2\pi}\int_{\C}[\d  z]\,\exp\left(-\frac{1}{2}z\overline{z}\right)\D(z,\overline{z})\exp\left(\frac{1}{2}s\overline{z}\right).
\end{multline}
Here, for the complex plane $\C$ and the complex variable $z=x+y\i$, the measure is $[\d  z]=\d x\d y$. The partiton function $\tau^{o,ext}$ is uniquely determined by the equations
\begin{align}
	&\left.\tau^{o,ext}\right\vert_{s_{i}=0} = \tau^o, \quad i\geq 1.\nonumber\\
	&\frac{\partial}{\partial s_n}\tau^{o,ext}=\frac{1}{(n+1)!}\frac{\partial^{n+1}}{\partial s_0^{n+1}}\tau^{o,ext}, \quad s_0=s, \quad n\geq 0. \label{PDE}
\end{align}
Therefore, following $\ZZ^o$, we can write the matrix model $\ZZ_1^{o,ext,s}$ as
\begin{multline}\label{Z1ext}
	\ZZ_1^{o,ext,s}=c_{\Lambda,M}\exp\left(-\frac{1}{3}\trr\Lambda^3\right)\int_{\HH_{M}}[\d H]\,\exp\left( \frac{1}{6}\trr H^3\right)\exp\left( -\frac{1}{2}\trr H\Lambda^2\right)\\
	\frac{1}{2\pi}\int_{\C}[\d  z]\,\exp\left(-\frac{1}{2}z\overline{z}\right)\D(z,\overline{z})\exp\left(\sum_{i=0}^{
	\infty} \frac{2^{-i-1}}{(i+1)!}s_i\overline{z}^{i+1} \right). 
\end{multline}
Under the variable change \eqref{stol}, the model for $\tau^{o,ext}$ becomes
\begin{align*}
	\ZZ_{1}^{o,ext}
	=&c_{\Lambda,M}\exp\left(-\frac{1}{3}\trr\Lambda^3\right)\det(\Lambda)\int_{\HH_{M}}[\d H]\,\exp\left( \frac{1}{6}\trr H^3\right)\exp\left( -\frac{1}{2}\trr H\Lambda^2\right)\,I^{(C)}_1,
\end{align*}
where
\begin{equation}\label{I1C}
	I^{(C)}_1=\frac{1}{2\pi}\int_{\C}[\d  z]\,\exp\left(-\frac{1}{2}z\overline{z}\right)\,
	\frac{\D(z,\overline{z})}{\det\sqrt{\Lambda^2-\overline{z}\id_M}}.
\end{equation}
In \cite{ABT}, the authors presented a proof for $Z_1=\ZZ_1^{o,ext}$ directly through their matrix representations using the Fourier transform and the properties of the delta function. 

In this paper, inspired by the previous work in \cite{BT} and \cite{ABT}, we first present a direct proof of the identification of $Z_1$ and $\ZZ_1^{o,ext}$ using the technique of operational calculus. 
Our proof further implies that the model $\ZZ^o$ for the open partition function can be transformed into the following expression.
\begin{Thm}\label{T2}
	The matrix model $\ZZ^o$ defined in Eq.\eqref{Zo1} can be written as
	 \begin{multline}\label{Zo2}
	 	\ZZ^o=c_{\Lambda,M}\exp\left(-\frac{1}{3}\trr\Lambda^3\right)\int_{\HH_{M}}[\d H]\,\exp\left( \frac{1}{6}\trr H^3\right)\exp\left( -\frac{1}{2}\trr H\Lambda^2\right)\\
	 	\frac{1}{2\pi}\int_{\C}[\d  z]\,\exp\left(-\frac{1}{2}z\overline{z}\right)\exp\left( \frac{1}{2}s\overline{z}\right)\,\frac{\det(\sqrt{\Lambda^2-\overline{z}\id_{M}})}{\det(-H-z \id_{M})}.
	 \end{multline}
\end{Thm}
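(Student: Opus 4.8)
The entire discrepancy between \eqref{Zo1} and \eqref{Zo2} lives in the inner $z$-integrand: both sides carry the same prefactor, the same Kontsevich integral $\int_{\HH_M}[\d H]\,\exp(\frac16\trr H^3-\frac12\trr H\Lambda^2)$, and the same Gaussian factor $\exp(-\frac12 z\overline{z}+\frac12 s\overline{z})$. So the plan is to reduce Theorem~\ref{T2} to the single transformation
\[
	\D(z,\overline{z})=\exp\left(\frac16 z^3\right)\frac{\det(H+\sqrt{\Lambda^2-\overline{z}\id_M}-z\id_M)}{\det(H+\sqrt{\Lambda^2-\overline{z}\id_M}+z\id_M)}\;\longrightarrow\;\frac{\det\sqrt{\Lambda^2-\overline{z}\id_M}}{\det(-H-z\id_M)},
\]
understood as an identity that holds once it is placed under $\frac{1}{2\pi}\int_{\C}[\d z]\,\exp(-\frac12 z\overline{z}+\frac12 s\overline{z})$ and under the Hermitian integral $\int_{\HH_M}[\d H]$. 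This is precisely the identity underlying the $Z_1=\ZZ_1^{o,ext}$ computation (there one divides by $\det\sqrt{\Lambda^2-\overline{z}\id_M}$ and performs the then-trivial $z$-integration to reach $1/\det(-H)$); Theorem~\ref{T2} is the same identity simply left paired with $\exp(\frac12 s\overline{z})$. I would stress at the outset that this is \emph{not} a pointwise identity in $H$: the cubic $\exp(\frac16 z^3)$ forces the use of the $H$-integration, since a direct large-$H$ expansion already exhibits a mismatch at the leading ($H^0$) order, the left side tending to $\exp(\frac16 z^3)$ while the right side decays.

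The engine is the complex-Gaussian operational identity $\frac{1}{2\pi}\int_{\C}[\d z]\,\exp(-\frac12 z\overline{z})F(z)G(\overline{z})=\big[F(2\partial_t)G(t)\big]_{t=0}$, which converts each power of $z$ into a derivative acting on the antiholomorphic data. To expose such a holomorphic-times-antiholomorphic structure, one natural device is to clean the determinants by the $\overline{z}$-dependent shift $H\mapsto H-\sqrt{\Lambda^2-\overline{z}\id_M}$ of the Hermitian variable, which turns the ratio into $\det(H-z\id_M)/\det(H+z\id_M)$ while producing, through $\frac16\trr H^3-\frac12\trr H\Lambda^2$ and the relation $(\sqrt{\Lambda^2-\overline{z}\id_M})^2=\Lambda^2-\overline{z}\id_M$, a controlled modification of the weight together with a purely $\overline{z}$-dependent prefactor. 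The cubic $\exp(\frac16 z^3)$ is then to be absorbed using the Ward identity of the Kontsevich weight, i.e. integration by parts in $H$ trading multiplication by $\frac12(H^2-\Lambda^2)$ for $-\partial_H$; this is exactly the step that lets the cubic in $z$ and the cubic in $H$ communicate and cancel. Where the angular $H$-integration must be performed explicitly I would invoke the Harish-Chandra-Itzykson-Zuber formula to reduce to eigenvalues.

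The hard part will be disentangling the intertwined dependence on $z$ and $\overline{z}$ carried by the square root $\sqrt{\Lambda^2-\overline{z}\id_M}$, which sits simultaneously in both determinants of $\D$ and obstructs any naive factorization into holomorphic and antiholomorphic parts. Concretely, I expect the delicate bookkeeping to be twofold: first, expanding the ratio of determinants together with the cubic as formal series in $z$, applying the operational identity so that powers of $z$ differentiate the $\overline{z}$-content, and resumming; and second, tracking the weight modification from the shift together with the boundary terms of the integration by parts, so that the original Kontsevich weight is restored exactly and every residual factor collapses onto the single clean quotient $\det\sqrt{\Lambda^2-\overline{z}\id_M}/\det(-H-z\id_M)$. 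Once this collapse is verified, pairing with $\exp(\frac12 s\overline{z})$ and reinstating the common prefactor yields \eqref{Zo2}, proving Theorem~\ref{T2}.
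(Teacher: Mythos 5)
Your framing is right on two counts: the identity in Theorem~\ref{T2} is genuinely not pointwise in $H$ (your large-$H$ check is correct), and the complex-Gaussian operational identity \eqref{equal} is the correct engine; the paper's proof is indeed an outgrowth of the $Z_1=\ZZ_1^{o,ext}$ computation. But the central claim of your reduction --- that Theorem~\ref{T2} ``is the same identity'' as the one underlying the $N=1$ computation, related to it by dividing out or reinstating $\det\sqrt{\Lambda^2-\overline{z}\id_M}$ --- is not a valid formal step, and it is exactly the point that has to be worked for. That factor depends on $\overline{z}$, and equality of integrals is not preserved when both integrands are multiplied by a non-constant function of the integration variable: e.g.\ $\int_{\C}e^{-\frac12 z\overline{z}}(z-\overline{z})\,[\d z]=0$, while $\int_{\C}e^{-\frac12 z\overline{z}}\,\overline{z}(z-\overline{z})\,[\d z]\neq 0$. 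The paper closes this gap with a device your proposal never produces: Eq.~\eqref{trLambdaop}, which trades $\partial_{s_-}$ acting on functions of $\sqrt{\Lambda^2-s_-\id_M}$ for the operator $-\frac12\trr\Lambda^{-1}\frac{\partial}{\partial\Lambda}$ acting after the replacement $\sqrt{\Lambda^2-s_-\id_M}\to\Lambda$. This turns both $I^o(s)$ and $I^{(C)}_1(s)$ into expressions $\exp\left(-\trr\Lambda^{-1}\frac{\partial}{\partial\Lambda}\frac{\partial}{\partial s}\right)\cdot(\cdots)$ whose arguments differ only by $\det\Lambda$, a factor constant in $H$ and $s$. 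Since every step of the Section~\ref{S2} machinery (conjugating the operator past $e^{-\frac12\trr H\Lambda^2}$, the shift $H\to H-\Lambda$, the Weierstrass transform, Lemma~\ref{ytoM+1}, the Gaussian integration over the extra row and column) is linear and acts only on the $H$- and $s$-dependence, this constant rides through; Eq.~\eqref{pa} then upgrades to Eq.~\eqref{Zo4}, and reading \eqref{equal} and \eqref{trLambdaop} backwards restores the complex integral \eqref{Zo2}. It is this operator formalism, not a naive pairing of integrands, that makes the $N=1$ computation reusable.

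Your proposed execution is also a different route from the paper's and is not carried out. The $\overline{z}$-dependent shift $H\mapsto H-\sqrt{\Lambda^2-\overline{z}\id_M}$ moves the Hermitian contour into complex matrices and, worse, pushes $\overline{z}$ into the quadratic part of the weight, $-\frac12\trr H^2\sqrt{\Lambda^2-\overline{z}\id_M}$; once there, applying \eqref{equal} forces every power of $z$ coming from $e^{z^3/6}\det\frac{H-z\id_M}{H+z\id_M}$ to differentiate the full Kontsevich weight --- precisely the entanglement you yourself flag as ``the hard part'' and then defer. Likewise, the claim that the cubic $e^{z^3/6}$ can be absorbed ``using the Ward identity'' is asserted, not shown; in the paper the cubic is consumed by an entirely different mechanism: after the Weierstrass transform it appears as $e^{y^3/6}$, the cubic of an additional eigenvalue, and Lemma~\ref{ytoM+1} reassembles the $y$-integral into an $(M+1)\times(M+1)$ Hermitian integral. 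Since the disentangling mechanism is exactly what your write-up lacks, the argument does not close as written.
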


In \cite{SB}, B. Safnuk suggested an approach to define refined open intersection numbers. His combinatorial formula directly gives the Konsevich-Penner matrix model $Z_N$ for $N\geq 1$, (see also \cite{SB1}). In \cite{ABT}, the authors used a different approach to construct this refinement. In terms of the matrix model, the refinement is to switch the complex integral part of the model $\ZZ_{1}^{o,ext,s}$ in Eq.\eqref{Z1ext} to a complex matrix integral $I_N^{(C,s)}$ (see Eq.\eqref{INCs}) in the the space $\M_{N}(\C)$ of $N\times N$ complex matrices with Lebesgue measure $[\d Z]$. They conjectured that this refinement will lead to the Kontsevich-Penner model $Z_N$ under the change of variables \eqref{stol}, which gives us the equality
\begin{equation}\label{stolambda}
	\exp\left(  \sum_{i\geq 0}\frac{2^{-i-1}}{(i+1)!}s_i\trr(\oZ^t)^{i+1}\right)=\frac{\det\Lambda^N}{\det\sqrt{\Lambda^{2}\otimes \id_N-\id_M\otimes\oZ^t}}.
\end{equation}

Using Theorem \ref{T2} and the condition \eqref{PDE}, we perform the refinement on Eq.\eqref{Zo2}, and define the matrix model for the extended refined open partition function as the following.
\begin{multline}\label{ZNexts}
	\ZZ_N^{o,ext,s}=c_{\Lambda,M}\exp\left(-\frac{1}{3}\trr\Lambda^3\right)\int_{\HH_{M}}[\d H]\,\exp\left( \frac{1}{6}\trr H^3\right)\exp\left( -\frac{1}{2}\trr H\Lambda^2\right)\\
	\frac{1}{(2\pi)^{N^2}}\int_{\M_{N}(\C)}[\d Z]\,\exp\left( -\frac{1}{2}\trr(Z\oZ^t)\right)\exp\left( \frac{1}{6}\trr(Z^3)\right)\\
	\frac{\det\sqrt{\Lambda^2\otimes \id_{N}-\id_M\otimes \oZ^t}}{\det(-H\otimes\id_N-\id_M\otimes Z)}
	\exp\left(  \sum_{i\geq 0}\frac{2^{-i-1}}{(i+1)!}s_i\trr(\oZ^t)^{i+1}\right).
\end{multline}
Furthermore, we define
\begin{equation}\label{ZNext}
	\left.\ZZ_N^{o,ext}= \ZZ_N^{o,ext,s}\right\vert_{s_i=s_i(\Lambda)}.
\end{equation}
Then, after the substitution \eqref{stolambda} in Eq.\eqref{ZNexts}, there is an obvious cancellation, and the model $\ZZ_N^{o,ext}$ immediately implies that:
\begin{Thm}\label{T1}
	For the Kontsevich-Penner model $Z_N$ and the model $\ZZ_N^{o,ext}$ with $N\geq 1$, we have
	 $$Z_N=\ZZ_N^{o,ext}.$$
\end{Thm}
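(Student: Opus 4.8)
The plan is to reduce the statement to a direct substitution-and-cancellation followed by a single localization of the complex matrix integral. The mechanism I would isolate at the outset is the elementary fact that, for the complex Gaussian measure on $\M_N(\C)$, any integrand holomorphic in $Z$ (depending on $Z$ but not on $\oZ$) is evaluated at the origin:
\[
\frac{1}{(2\pi)^{N^2}}\int_{\M_N(\C)}[\d Z]\,\exp\left(-\frac{1}{2}\trr(Z\oZ^t)\right)f(Z)=f(0),
\]
because $\trr(Z\oZ^t)=\sum_{i,j}|Z_{ij}|^2$ factorizes over entries and each scalar integral $\frac{1}{2\pi}\int_\C[\d w]\,e^{-\frac{1}{2}w\bar w}w^n=\delta_{n,0}$ annihilates every non-constant monomial. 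Consequently the whole argument amounts to arranging the $Z$-integrand of \eqref{ZNexts} to be holomorphic, and reading off its value at $Z=0$.

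First, I would verify the Miwa identity \eqref{stolambda}. Writing
\[
\frac{\det\Lambda^N}{\det\sqrt{\Lambda^2\otimes\id_N-\id_M\otimes\oZ^t}}=\det\left(\id_{MN}-\Lambda^{-2}\otimes\oZ^t\right)^{-1/2}
\]
and expanding $-\frac{1}{2}\trr\log(\id-\Lambda^{-2}\otimes\oZ^t)=\frac{1}{2}\sum_{k\ge1}\frac{1}{k}\,\trr\Lambda^{-2k}\,\trr(\oZ^t)^k$, matching powers of $\trr(\oZ^t)$ against the exponent $\sum_{i\ge0}\frac{2^{-i-1}}{(i+1)!}s_i\trr(\oZ^t)^{i+1}$ forces precisely $s_i=2^i i!\,\trr\Lambda^{-2i-2}$, which is \eqref{stol}. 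Substituting \eqref{stol} into \eqref{ZNexts} then makes the numerator $\det\sqrt{\Lambda^2\otimes\id_N-\id_M\otimes\oZ^t}$ cancel the denominator produced by \eqref{stolambda}, leaving the $\oZ$-independent constant $\det\Lambda^N$. Apart from the Gaussian weight, the residual $Z$-integrand
\[
f(Z)=\exp\left(\frac{1}{6}\trr Z^3\right)\frac{\det\Lambda^N}{\det(-H\otimes\id_N-\id_M\otimes Z)}
\]
is holomorphic in $Z$, so the localization evaluates it at $Z=0$: using $\det(-H\otimes\id_N)=\det(-H)^N$, the $Z$-integral collapses to $\det\Lambda^N/\det(-H)^N$. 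Reinserting this into the $H$-integral reproduces \eqref{ZN} verbatim, giving $Z_N=\ZZ_N^{o,ext}$.

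The main obstacle is to justify the localization rigorously, since the cubic vertex $\exp(\frac{1}{6}\trr Z^3)$ renders the $Z$-integral only conditionally (not absolutely) convergent, and the propagator $1/\det(-H\otimes\id_N-\id_M\otimes Z)$ has poles where $\id_M\otimes Z$ meets the spectrum of $-H\otimes\id_N$. I would handle this in the usual formal/perturbative framework of Kontsevich-type models: assuming $-H$ invertible so that there is no pole at the origin, I expand both $\exp(\frac{1}{6}\trr Z^3)$ and the propagator as power series in the entries of $Z$ and integrate against the Gaussian term by term, whereupon every monomial but the constant one vanishes. The remaining care is bookkeeping, namely checking that the Kronecker reduction $\det(-H\otimes\id_N-\id_M\otimes Z)\big|_{Z=0}=\det(-H)^N$ is applied correctly and that the prefactor $c_{\Lambda,M}\exp(-\frac{1}{3}\trr\Lambda^3)$, the Kontsevich weight $\exp(\frac{1}{6}\trr H^3-\frac{1}{2}\trr H\Lambda^2)$, and the measure $[\d H]$ are untouched by the $Z$-integration, so that the two $H$-integrals agree integrand by integrand.
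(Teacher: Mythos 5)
Your proposal is correct and follows essentially the same route as the paper: after the Miwa substitution \eqref{stol}, the identity \eqref{stolambda} produces the cancellation in Eq.\eqref{ZNexts}, and the remaining $Z$-integrand, being holomorphic in $Z$, is localized at $Z=0$ by the complex Gaussian measure, yielding $\det\Lambda^N/\det(-H\otimes\id_N)=\det\Lambda^N/\det(-H)^N$ and hence Eq.\eqref{ZN}. The only difference is one of presentation: you spell out the entrywise orthogonality argument for the localization and the power-series derivation of \eqref{stolambda}, both of which the paper treats as immediate (the scalar analogue of your localization lemma is its Appendix A.1, Eq.\eqref{equal}).
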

Our main goal of this paper is to give the above identification between the model  $\ZZ_N^{o,ext}$ for the extended refined open partition function and the Kontsevich-Penner matrix model for the general $N\geq 1$.

The paper is organized as follows. In Sect.\ref{S2}, we prove the case $N=1$ of Theorem \ref{T1}. In Sect.\ref{S3}, we prove Theorem \ref{T2} and discuss the Virasoro constraints for $\tau^o$. In Sect.\ref{S4}, we address some difficulties on relating  the conjectural complex matrix model in \cite{ABT} to $Z_N$ for $N\geq 2$ using our approach.

\section{The extended open partition function}\label{S2}

For the diagonal matrix $\Lambda$, let 
\begin{equation*}
	\Delta_{M}(\Lambda)=\prod_{1\leq i,j\leq M}(\lambda_j-\lambda_i).
\end{equation*}
Let $f(H)$ be a unitary invariant function. By the Harish-Chandra-Itzykson-Zuber (HCIZ) formula (\cite{HC},\cite{IZ}), we can express the following Hermitian Gaussian integral as
\begin{multline*}
	\int_{\HH_{M}} [\d H]\,f(H)\exp\left(-\frac{1}{2}\trr H^2\Lambda\right)
	\\
	=\frac{(2\pi)^{\frac{M^2-M}{2}}}{M!\Delta_{M}(\Lambda)}\int_D \d \MM\, f(\MM)\det(e^{-\frac{1}{2}m_i^2\lambda_j})_{1\leq i,j\leq M}\prod_{1\leq i<j\leq M}\frac{m_j-m_i}{m_j+m_i},
\end{multline*}
where $\MM=\text{diag}(m_1,m_2,\dots,m_{M})$ and $\d\MM=\d m_1\d m_2\dots \d m_{M}$. 

For $s\in\R$ and $t>0$, we consider the following two Hermitian matrix integrals,
\begin{align}
	I^{(t)}_{M,y}=&\int_{\HH_{M}}[\d H]\,\exp\left( \frac{1}{6}\trr H^3\right)\exp\left( -\frac{1}{2}\trr H^2\Lambda\right)\nonumber\\
	&\quad \int_{-\infty}^{\infty}\d y\,	\exp\left( -\frac{(y+\trr H-s)^2}{4t}\right) e^{\frac{y^3}{6}}\det\frac{H-y\id_M}{H+y\id_M}\label{ave1},
\end{align}
and
\begin{equation*}
	I^{(t)}_{M+1}=\int_{\HH_{M+1}}[\d H']\,\exp\left( \frac{1}{6}\trr (H')^3\right)\exp\left( -\frac{1}{2}\trr (H')^2\Lambda'\right)\exp\left( -\frac{(\trr H'-s)^2}{4t}\right),
\end{equation*}
where $H'$ is the Hermitian matrix with size $M+1$ and $\Lambda'=\text{diag}(\lambda_1,\lambda_2,\dots,\lambda_M,\lambda_{M+1})$ with $\lambda_{M+1}=0$. We first use the same polynomial average technique as the one used in \cite{BT} to prove the following lemma.  
\begin{Lem}\label{ytoM+1}
	\begin{equation*}
		I^{(t)}_{M,y}
		=\frac{\det(\Lambda)}{(2\pi)^M} I^{(t)}_{M+1}.
	\end{equation*}
\end{Lem}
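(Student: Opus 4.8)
The plan is to reduce both sides to integrals over eigenvalues by the HCIZ formula quoted above, and then to reinterpret the auxiliary variable $y$ on the left as the ``missing'' $(M+1)$-st eigenvalue on the right. The integrand of $I^{(t)}_{M,y}$ is a unitary invariant function of $H$ times $\exp(-\frac12\trr H^2\Lambda)$, because $\det\frac{H-y\id_M}{H+y\id_M}=\prod_i\frac{m_i-y}{m_i+y}$, $\trr H^3$ and $\trr H$ all depend only on the eigenvalues $m_1,\dots,m_M$ of $H$. Hence the formula applies verbatim and turns $I^{(t)}_{M,y}$ into
\begin{multline*}
	I^{(t)}_{M,y}=\frac{(2\pi)^{(M^2-M)/2}}{M!\,\Delta_M(\Lambda)}\int_{\R^M}\d\MM\,\exp\Big(\tfrac16{\textstyle\sum_i} m_i^3\Big)\det(e^{-\frac12 m_i^2\lambda_j})\prod_{i<j}\tfrac{m_j-m_i}{m_j+m_i}\\
	\times\int_{\R}\d y\;e^{y^3/6}\exp\!\big(-(y+{\textstyle\sum_i} m_i-s)^2/(4t)\big)\prod_i\tfrac{m_i-y}{m_i+y}.
\end{multline*}
Applying the same formula to $I^{(t)}_{M+1}$, now with $M+1$ eigenvalues and with $\Lambda'$, gives the analogous expression; I will write $m_{M+1}=y$ for the last eigenvalue throughout.

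The heart of the matter is to match the two determinants. Since $\lambda_{M+1}=0$, the last column of $\det(e^{-\frac12 m_i^2\lambda'_j})_{1\le i,j\le M+1}$ is identically $1$, so a Laplace expansion along it writes this determinant as $\sum_{k=1}^{M+1}(-1)^{M+1+k}C_k$, where $C_k$ is the $M\times M$ minor obtained by deleting the $k$-th eigenvalue and the last column. All the remaining factors of the $(M+1)$-eigenvalue integrand are symmetric in $m_1,\dots,m_{M+1}$: the cubic and Gaussian exponentials are manifestly symmetric, whereas $\prod_{i<j}\frac{m_j-m_i}{m_j+m_i}$ is antisymmetric and $C_k$ carries a compensating sign. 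Consequently each of the $M+1$ terms of the expansion yields the same value after relabelling the integration variables, and the sum collapses to $M+1$ copies of the single term containing $C_{M+1}=\det(e^{-\frac12 m_i^2\lambda_j})_{1\le i,j\le M}$; this factor $M+1$ converts $\frac{1}{(M+1)!}$ into $\frac{1}{M!}$. With $m_{M+1}=y$, the minor $C_{M+1}$ is exactly the $M\times M$ determinant of $I^{(t)}_{M,y}$, the exponentials split as $\exp(\frac16\sum_{i\le M}m_i^3)\,e^{y^3/6}$ and $\exp(-(y+\sum_{i\le M}m_i-s)^2/(4t))$, and the antisymmetric product splits as $\prod_{i<j\le M}\frac{m_j-m_i}{m_j+m_i}\cdot(-1)^M\prod_{i\le M}\frac{m_i-y}{m_i+y}$, thereby reproducing the full $y$-integrand of $I^{(t)}_{M,y}$ up to the sign $(-1)^M$.

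It then remains only to collect the constants. The ratio of the HCIZ prefactors contributes $(2\pi)^{(M^2-M)/2-(M^2+M)/2}=(2\pi)^{-M}$, while the Vandermonde ratio is $\Delta_{M+1}(\Lambda')/\Delta_M(\Lambda)=\prod_{i=1}^M(0-\lambda_i)=(-1)^M\det\Lambda$, using $\lambda_{M+1}=0$. Multiplying these with the $(-1)^M$ thrown off by the product factor, the two signs cancel and precisely $\frac{\det\Lambda}{(2\pi)^M}$ survives, which is the asserted identity. I expect the one delicate point to be the sign bookkeeping in this polynomial-average step: one must check that the cofactor sign $(-1)^{M+1+k}$, the antisymmetry of $\prod_{i<j}\frac{m_j-m_i}{m_j+m_i}$, and the permutation that moves the deleted eigenvalue into the last slot combine so that all $M+1$ terms are genuinely equal rather than alternating. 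Everything else reduces to a term-by-term comparison of the two eigenvalue integrands.
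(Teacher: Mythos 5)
Your proposal is correct and follows essentially the same route as the paper's proof: apply HCIZ to both integrals, expand the $(M+1)\times(M+1)$ determinant along the column of ones produced by $\lambda_{M+1}=0$, use the antisymmetry of $\prod_{i<j}\frac{m_j-m_i}{m_j+m_i}$ to show that all $M+1$ cofactor terms coincide, and then match the constants. The delicate sign point you flag does work out exactly as you expect — the paper verifies it explicitly via its identity \eqref{prod}, where the cofactor sign $(-1)^{M+1-k}$ and the rearrangement sign $(-1)^{k-1}$ combine to a uniform $(-1)^M$ that cancels against the one in $\Delta_{M+1}(\Lambda')=(-1)^M\det\Lambda\,\Delta_M(\Lambda)$, matching your constant bookkeeping.
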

\begin{proof}
Using the HCIZ formula, we can express the Hermitian matrix integral \eqref{ave1} as
\begin{align*}
	I^{(t)}_{M,y}&=\frac{(2\pi)^{\frac{M^2-M}{2}}}{M!\Delta_{M}(\Lambda)}\int_{-\infty}^{\infty}\d y\,\int_{\R^M}\d \MM\,\exp\left( \frac{1}{6}\trr \MM^3\right)e^{\frac{y^3}{6}}\exp\left( -\frac{(y+\trr \MM-s)^2}{4t}\right)\\
	&\quad\quad\quad\det\left(e^{\frac{1}{2}m_i^2\lambda_j}\right)_{1\leq i,j\leq M}\det\frac{\MM-y\id_M}{\MM+y\id_M}\prod_{1\leq i<j\leq M}\frac{m_j-m_i}{m_j+m_i}.
\end{align*}
On the other hand, for the diagonal matrix $\MM'$ of size $M+1$, using the HCIZ formula again, we can transform the integral $I^{(t)}_{M+1}$ into
\begin{align}
	I^{(t)}_{M+1}
	=&\frac{(2\pi)^{\frac{M^2+M}{2}}}{(M+1)!\Delta_{M}(\Lambda)}\frac{(-1)^M}{\det(\Lambda)}\int_{\R^{M+1}}\d \MM\,\exp\left( \frac{1}{6}\trr \MM'^3\right)\exp\left( -\frac{(\trr \MM'-s)^2}{4t}\right)\nonumber\\
	&\quad\quad\quad\det\left(e^{-\frac{1}{2}m_i^2\lambda_j}\right)_{1\leq i,j\leq M+1}\prod_{1\leq i<j\leq M+1}\frac{m_j-m_i}{m_j+m_i}.\label{Int:M+1}
\end{align}
Since $\lambda_{M+1}=0$, the determinant part in the above integral can be re-written as the summation of $M+1$ terms as:
\begin{equation*}
	\det\left(e^{-\frac{1}{2}m_i^2\lambda_j}\right)_{1\leq i,j\leq M+1}=\sum_{k=1}^{M+1}(-1)^{M+1-k} \det\left(e^{-\frac{1}{2}m_i^2\lambda_j}\right)_{\substack{1\leq i\leq M+1,i\neq k\\1\leq j\leq M}}.
\end{equation*}
For $1\leq k\leq M+1$, the product part in the integral $I^{(t)}_{M+1}$ can be re-written as
\begin{align}
	&\prod_{\substack{1\leq i<j\leq M+1\\i,j\neq k}}\frac{m_j-m_i}{m_j+m_i}\prod_{k<j\leq M+1}\frac{m_j-m_k}{m_j+m_k}\prod_{1\leq i<k}\frac{m_k-m_i}{m_k+m_i}\nonumber\\
	=&(-1)^{k-1}\prod_{\substack{1\leq i<j\leq M+1\\i,j\neq k}}\frac{m_j-m_i}{m_j+m_i}\prod_{\substack{1\leq j\leq M+1\\j\neq k}}\frac{m_j-m_k}{m_j+m_k}.\label{prod}
\end{align} 
We can treat the variable $m_k=y$ as an additional variable, and transform Eq.\eqref{Int:M+1} into 
\begin{align*}
	I^{(t)}_{M+1}=&\frac{(2\pi)^{\frac{M^2+M}{2}}}{M!\Delta_{M}(\Lambda)\det(\Lambda)}\int_{\R^M}\d \MM\,\exp\left( \frac{1}{6}\trr \MM^3\right) \det\left(e^{-\frac{1}{2}m_i^2\lambda_j}\right)_{1\leq i,j\leq M}\prod_{1\leq i<j\leq M }\frac{m_j-m_i}{m_j+m_i}\nonumber\\
	&\quad\quad\quad\int_{-\infty}^{\infty}\d y\,e^{\frac{y^3}{6}}\exp\left( -\frac{(y+\trr \MM-s)^2}{4t}\right)\det\frac{\MM-y\id_M}{\MM+y\id_M}\\
	=&\frac{(2\pi)^M}{\det(\Lambda)}  \int_{\HH_M}[\d H]  \exp\left( \frac{1}{6}\trr H^3 \right) \exp\left( -\frac{1}{2}\trr H^2\Lambda\right)\\
	&\quad\quad\int_{-\infty}^{\infty} \d y \, e^{\frac{y^3}{6}}\exp\left( -\frac{(y+\trr H-s)^2}{4t}\right) \det\frac{H-y\id_M}{H+y\id_M}.
\end{align*}
This proves the lemma.
\end{proof}

Next, we prove the case $N=1$ of Theorem \ref{T1}.

\begin{proof}[of $Z_1=\ZZ_1^{o,ext}$]
Let us define the series $I^{(C)}_1(s)$ to be 
\begin{equation}\label{I1Cs}
	I^{(C)}_1(s)=\frac{1}{2\pi}\int_{\C}[\d  z]\,\exp\left(-\frac{1}{2}z\overline{z}\right)\,\exp\left( \frac{1}{2}s\overline{z}\right)F(z,\overline{z}),
\end{equation}
where, compared with Eq.\eqref{Dz},
\begin{equation}\label{F1}
	F(z,\overline{z})=\frac{\D(z,\overline{z})}{\det\sqrt{\Lambda^2-\overline{z}\id_M}}.
\end{equation}
Note that, when $s=0$, we obtain exactly $I^{(C)}_1$ in Eq.\eqref{I1C}. By Eq.\eqref{equal}, we have
\begin{equation*}
	I^{(C)}_1(s)=\left.\left(\exp\left(2\frac{\partial^2}{\partial s\partial s_{-}}\right)\cdot F(s,s_{-})\right)\right\vert_{s_{-}=0}.
\end{equation*}
Now, observe that
\begin{equation}\label{trLambdaop}
	\left.\left(\frac{\partial}{\partial s_{-}}\cdot\trr\sqrt{\Lambda^2-s_{-}\id_M}\right)\right\vert_{s_{-}=0}=-\frac{1}{2}\trr \Lambda^{-1}\frac{\partial}{\partial \Lambda}\cdot \Lambda,
\end{equation}
where
\begin{equation*}
	\trr \Lambda^{-1}\frac{\partial}{\partial \Lambda}=\sum_{i=1}^M \frac{1}{\lambda_i}\frac{\partial}{\partial \lambda_i}.
\end{equation*}
In fact, for functions formed by 
\begin{equation*}
	\trr\left( \Lambda^2-s_{-}\id_M\right)^{\frac{k}{2}}=\sum_{i=1}^M\left( \lambda_i^2-s_{-}\right)^{\frac{k}{2}}, 
\end{equation*} 
we have
\begin{align*}
	\frac{\partial}{\partial s_{-}}\cdot\trr\left( \Lambda^2-s_{-}\id_M\right)^{\frac{k}{2}}
	=&-\frac{k}{2}\trr\left( \Lambda^2-s_{-}\id_M\right)^{\frac{k}{2}-1}\\
	=&-\frac{1}{2}\trr \Lambda^{-1}\frac{\partial}{\partial \Lambda}\cdot\trr\left( \Lambda^2-s_{-}\id_M\right)^{\frac{k}{2}}.
\end{align*}
Therefore, after replacing $\sqrt{\Lambda^2-s_{-}\id_M}$ by $\Lambda$ in $F(s,s_{-})$, we have
\footnote
{
This explains why we perform the change $H\rightarrow H+\Lambda$ on the Hermitian Gaussian integral and use the expression of $\D(z,\overline{z})$ instead.
}
\begin{equation}\label{I1s}
	I^{(C)}_1(s)=\exp\left(-\trr \Lambda^{-1}\frac{\partial}{\partial \Lambda}\frac{\partial}{\partial s}\right)\cdot e^{\frac{s^3}{6}}\frac{1}{\det\Lambda}\det\frac{H+\Lambda -s\id_M}{H+\Lambda+s\id_M},
\end{equation}
If we define
\begin{equation*}
	\I_M^{(ext,s)}=\exp\left(-\frac{1}{3}\trr\Lambda^3\right)\int_{\HH_{M}}[\d H]\,\exp\left( \frac{1}{6}\trr H^3\right)\exp\left( -\frac{1}{2}\trr H\Lambda^2\right)\,I^{(C)}_1(s),
\end{equation*}
then,  
\begin{equation}\label{Z1o}
		\ZZ_{1}^{o,ext}=c_{\Lambda,M}\det(\Lambda)\left.\I_M^{(ext,s)}\right\vert_{s=0}.
\end{equation}

We focus on the transformation of the integral $\I_M^{(ext,s)}$. First, using the conjugation
\begin{multline*}
	\exp\left(\trr \Lambda^{-1}\frac{\partial}{\partial \Lambda}\frac{\partial}{\partial s}\right)\exp\left( -\frac{1}{2}\trr H\Lambda^2\right)\exp\left(-\trr \Lambda^{-1}\frac{\partial}{\partial \Lambda}\frac{\partial}{\partial s}\right)\\
	=\exp\left( -\frac{1}{2}\trr H\Lambda^2\right)\exp\left(-\trr H\frac{\partial}{\partial s}\right),
\end{multline*}
we can move the differential operator 
\begin{equation}\label{do}
	\exp\left(-\trr \Lambda^{-1}\frac{\partial}{\partial \Lambda}\frac{\partial}{\partial s}\right)
\end{equation}
out of the integral in $\I_M^{(ext,s)}$ as:
\begin{equation}\label{ZMs}
	\I_M^{(ext,s)}=\exp\left(-\frac{1}{3}\trr\Lambda^3\right)\exp\left(-\trr \Lambda^{-1}\frac{\partial}{\partial \Lambda}\frac{\partial}{\partial s}\right)\cdot \I_M^{(1)},
\end{equation}
where
\begin{multline*}
	\I_M^{(1)}=\int_{\HH_{M}}[\d H]\,\exp\left( \frac{1}{6}\trr H^3\right)\exp\left( -\frac{1}{2}\trr H\Lambda^2\right)\\
	\exp\left(-\trr H\frac{\partial}{\partial s}\right)\cdot e^{\frac{s^3}{6}}\frac{1}{\det\Lambda}\det\frac{H+\Lambda -s\id_M}{H+\Lambda+s\id_M}.
\end{multline*}
Next, we perform the change of variables $H\rightarrow H-\Lambda$ on the integral $\I_M^{(1)}$. 
This gives us
\begin{equation}\label{ZM1}
	\I_M^{(1)}=\exp\left(\trr\Lambda\frac{\partial}{\partial s}\right)\exp\left(\frac{1}{3}\trr\Lambda^3\right)\I_M^{(2)},
\end{equation}
where $\I_M^{(2)}$ now becomes a Gaussian integral as
\begin{multline*}
	\I_M^{(2)}=\frac{1}{\det\Lambda}\int_{\HH_{M}}[\d H]\,\exp\left( \frac{1}{6}\trr H^3\right)\exp\left( -\frac{1}{2}\trr H^2\Lambda\right)\\
	\exp\left(-\trr H\frac{\partial}{\partial s}\right)\cdot e^{\frac{s^3}{6}}\det\frac{H-s\id_M}{H+s\id_M}.
\end{multline*}
This expression satisfies the unitary invariant condition for the HCIZ formula. Let us perform the Weierstrass transform on the integral $\I_M^{(2)}$, (see Eq.\eqref{WT} in Appendix A.2 for some details). By Eq.\eqref{ave1},
 \begin{equation*}
	\exp\left( t\frac{\partial^2}{\partial s^2}\right)\cdot \I_M^{(2)}=\frac{1}{\det(\Lambda)}\frac{1}{2\sqrt{\pi t}}I^{(t)}_{M,y}.
\end{equation*}
Using Lemma \ref{ytoM+1}, we have 
\begin{equation}\label{polyave1}
 	\exp\left( t\frac{\partial^2}{\partial s^2}\right)\cdot \I_M^{(2)}
 	=\frac{1}{(2\pi)^M}\frac{1}{2\sqrt{\pi t}}I^{(t)}_{M+1}.
 \end{equation}
 
Suppose the Hermitian matrix $H'$ is in the form
 \begin{equation*}
 	H'=\begin{bmatrix}
 		H & C\\
 		\bar{C} ^t & y
 	\end{bmatrix},
 	\mbox{ where }
 	\bar{C} ^t= [\bar{C}_1,\bar{C}_2,\dots,\bar{C}_M], \quad C_i\in\C. 
 \end{equation*}
 Then, using the following expressions
 \begin{align*}
 	&\trr H'^3=\trr H^3+3\trr (\bar{C}^t H C)+y^3+3y C \bar{C} ^t;\\
 	&\trr (H')^2\Lambda'= \trr H^2\Lambda+ \trr (\bar{C}^t \Lambda C);\\
 	&\trr H'=\trr H+y,
 \end{align*}
 and the complex Gaussian integral
 \begin{equation*}
 	\frac{1}{(2\pi)^M}\int_{\C^M}\exp\left(-\frac{1}{2}\trr(\bar{C}^t A C)\right)\prod_{i=1}^M [\d C_i]=\frac{1}{\det(A)},
 \end{equation*}
 we can transform the Hermitian integral \eqref{polyave1} into
 \begin{align*}
 	 \exp\left( t\frac{\partial^2}{\partial s^2}\right)\cdot \I_M^{(2)}&=\int_{\HH_{M}}[\d H]\,\exp\left( \frac{1}{6}\trr H^3\right)\exp\left( -\frac{1}{2}\trr H^2\Lambda\right)\\
 	&\quad\quad\quad\quad\frac{1}{2\sqrt{\pi t}}\int_{-\infty}^{\infty}\d y\,\exp\left( -\frac{(\trr H+y-s)^2}{4t}\right)\frac{1}{\det(\Lambda-H-y \id_{M})}\\
 	=&\exp\left(-\frac{1}{3}\trr\Lambda^3\right)\int_{\HH_{M}}[\d H]\,\exp\left( \frac{1}{6}\trr H^3\right)\exp\left( -\frac{1}{2}\trr H\Lambda^2\right)\\
 	&\quad\quad\quad\quad\frac{1}{2\sqrt{\pi t}}\int_{-\infty}^{\infty}\d y\,\exp\left( -\frac{(\trr H+\trr\Lambda+y-s)^2}{4t}\right)\frac{1}{\det(-H-y \id_{M})}.
 \end{align*}
Taking into account Eq.\eqref{ZMs} and Eq.\eqref{ZM1}, we have
 \begin{align}
 	&\exp\left( t\frac{\partial^2}{\partial s^2}\right)\cdot\I_M^{(ext,s)}\nonumber\\
 	=&\exp\left(-\frac{1}{3}\trr\Lambda^3\right)\int_{\HH_{M}}[\d H]\,\exp\left( \frac{1}{6}\trr H^3\right)\exp\left( -\frac{1}{2}\trr H\Lambda^2\right)\nonumber\\
 	&\quad\quad\exp\left(-\trr \Lambda^{-1}\frac{\partial}{\partial \Lambda}\frac{\partial}{\partial s}\right)\frac{1}{2\sqrt{\pi t}}\int_{-\infty}^{\infty}\d y\,\exp\left( -\frac{(y-s)^2}{4t}\right)\frac{1}{\det(-H-y \id_{M})}\nonumber\\
 	=&\exp\left(-\frac{1}{3}\trr\Lambda^3\right)\int_{\HH_{M}}[\d H]\,\exp\left( \frac{1}{6}\trr H^3\right)\exp\left( -\frac{1}{2}\trr H\Lambda^2\right)\nonumber\\
 	&\quad\quad\quad\quad\quad\quad\exp\left(-\trr \Lambda^{-1}\frac{\partial}{\partial \Lambda}\frac{\partial}{\partial s}\right)\exp\left( t\frac{\partial^2}{\partial s^2}\right)\cdot\frac{1}{\det(-H-s \id_{M})}\label{pa}.
 \end{align}
Here we move the differential operator \eqref{do} back inside the integral $\I_M^{(ext,s)}$ again. The action of this operator on $\det(-H-s \id_{M})$ is trivial. Also, if the integral
\begin{equation*}
	\int_{-\infty}^{\infty}\exp\left( -\frac{(y-s)^2}{4t}\right)F(y) \,\d y=\int_{-\infty}^{\infty}\exp\left( -\frac{y^2}{4t}\right)F(y+s) \,\d y
\end{equation*}
is zero, then the series $F(y)$ is $0$. In other words, we should have
 \begin{multline}\label{IMe}
 	\I_M^{(ext,s)}=\exp\left(-\frac{1}{3}\trr\Lambda^3\right)\int_{\HH_{M}}[\d H]\,\exp\left( \frac{1}{6}\trr H^3\right)\exp\left( -\frac{1}{2}\trr H\Lambda^2\right)
 	\frac{1}{\det(-H-s \id_{M})}.
 \end{multline}
Finally, by Eq.\eqref{Z1o}, we have $\ZZ_{1}^{o,ext}=Z_1$. This completes the proof.
\end{proof}

\section{The matrix model for the open partition function}\label{S3}
In this section, we first show that how to deduce the expression $\ZZ^o$ in Eq.\eqref{Zo2}. Then, we show that this expression will allow us to describe the Virasoro constraints like the way in \cite{AO2}.

\begin{proof}[of Theorem \ref{T2}]
Let
\begin{align*}
	I^o(s)&=\frac{1}{2\pi}\int_{\C}[\d  z]\,\exp\left(-\frac{1}{2}z\overline{z}\right)\,\exp\left( \frac{1}{2}s\overline{z}\right)\D(z,\overline{z}),
\end{align*}
such that
\begin{equation}\label{Zo3}
	\ZZ^o=c_{\Lambda,M}\exp\left(-\frac{1}{3}\trr\Lambda^3\right)\int_{\HH_{M}}[\d H]\,\exp\left( \frac{1}{6}\trr H^3\right)\exp\left( -\frac{1}{2}\trr H\Lambda^2\right)\,I^o(s).
\end{equation}
Then, by Eq.\eqref{equal},
\begin{equation*}
	I^o(s)=\left.\left(\exp\left(2\frac{\partial^2}{\partial s\partial s_{-}}\right)\cdot \D(s,s_{-})\right)\right\vert_{s_{-}=0}.
\end{equation*}
From Eq.\eqref{F1}, we can see that
\begin{equation*}
	\D(s,s_{-})=\det\sqrt{\Lambda^2-s_{-}\id_{M}}\,F(s,s_{-}).
\end{equation*}
Due to this difference, after comparing with Eq.\eqref{pa}, we can see that the Hermitian integral \eqref{Zo3} can be expressed as
\begin{multline}\label{Zo4}
	\ZZ^o=c_{\Lambda,M}\exp\left(-\frac{1}{3}\trr\Lambda^3\right)\int_{\HH_{M}}[\d H]\,\exp\left( \frac{1}{6}\trr H^3\right)\exp\left( -\frac{1}{2}\trr H\Lambda^2\right)\\
	\exp\left(-\trr \Lambda^{-1}\frac{\partial}{\partial \Lambda}\frac{\partial}{\partial s}\right)\cdot\frac{\det(\Lambda)}{\det(-H-s \id_{M})}.
\end{multline}
By Eq.\eqref{equal} and Eq.\eqref{trLambdaop}, we can restore the complex integral as
\begin{multline*}
\exp\left(-\trr \Lambda^{-1}\frac{\partial}{\partial \Lambda}\frac{\partial}{\partial s}\right)\cdot\frac{\det(\Lambda)}{\det(-H-s \id_{M})}\\
	=\frac{1}{2\pi}\int_{\C}[\d  z]\,\exp\left( \frac{1}{2}s\overline{z}\right)\exp\left(-\frac{1}{2}z\overline{z}\right)\,\frac{\det(\sqrt{\Lambda^2-\overline{z}\id_{M}})}{\det(-H-z \id_{M})}.
\end{multline*}
\end{proof}

\noindent
{\bf Remark}: In \cite{BT}, the authors added the imaginary number $\i$ to the matrix integral as
\begin{equation*}
	\tau^o=\left.\left\{\exp\left(2\frac{\partial^2}{\partial s\partial s_{-}}\right)\cdot\left(e^{\frac{s^3}{6}}f_M^o\right)\right\}\right\vert_{s_{-}=0},
\end{equation*}
where
\begin{align*}
	f_M^o=&\int_{\HH_{M}}[\d H]\,\exp\left( \frac{\i}{6}\trr H^3\right)\exp\left( -\frac{1}{2}\trr H^2\Lambda\right)\\
	&\quad\quad\det\frac{\Lambda+\sqrt{\Lambda^2-s_{-}\id_M}-\i H+s\id_M}{\Lambda+\sqrt{\Lambda^2-s_{-}\id_M}-\i H-s\id_M},
\end{align*}
and they used the following integral transform 
\footnote[1]{
This integral transform on $f$ is (see, e.g., Eq.\eqref{invW})
\begin{equation*}
	\Phi_s^{form}[f](z)=\left.\left\{\exp\left(-\frac{1}{2t}\frac{\partial^2}{\partial  z^2}\right)\exp\left(\frac{1}{6}(z-t)^3-\frac{1}{6}z^3\right)\cdot f(z) \right\}\right\vert_{t=z}.
\end{equation*}
}
on power series $f=\exp\left(\frac{s^3}{6}\right)\widetilde{f}$,
\begin{align*}
	\Phi_s^{form}[f](z)&:=\sqrt{\frac{z}{2\pi}}\int_{\R} \widetilde{f}(-\i s+z)e^{\frac{\i s^3}{6}}e^{-\frac{1}{2}s^2z} \d s.
\end{align*}
They obtained a simplied expression of $\Phi_s^{form}[\tau^o]$ using Lemma 3.2 in \cite{BT}. The proof this Lemma requires the insertion of $\i$ and $z$ being an arbitrary positive real number, (so that after the integration by parts, the non-integral term is zero). 
\begin{flushright}
	$\Box$
\end{flushright}

\subsection{Virasoro constraints for $\tau^o$}
We consider the Heisenberg operators $\widehat{\alpha}_n$ that correspond to $-z^n$ in the $W_{1+\infty}$ algebra:
\[
\widehat{\alpha}_n=
\begin{cases}
	q_{-n} & n<0 \\
	n\frac{\partial}{\partial q_n} &  n>0.
\end{cases}
\]
The Virasoro operators $\widehat{L}_n$ correspond to $-z^{1+n}\partial_z-\frac{n+1}{2}z^n$. We can write the operators $\widehat{L}_{-2m-2}$ and $\widehat{L}_{2m}$ as
\begin{align*}
	\widehat{L}_{-2m-2}&=\sum_{i>0}q_{i+2m+2}\Wa_i+\frac{1}{2}\sum_{0<i<2m-2}q_iq_{2m+2-i},\\
	\widehat{L}_{2m}&=\sum_{j>0}q_j\Wa_{j+2m}+\frac{1}{2}\sum_{0<j<2m}\Wa_{j}\Wa_{2m-j},
\end{align*} 
Using our notations, the Virasoro constraints for $\tau^o$ (see, e.g., \cite{PST}, \cite{B1}, \cite{B2}, \cite{BT}) can be expressed as the following
\begin{align*}
	& \left( \widehat{L}_{-2}-\frac{\partial}{\partial q_{1}}+s\right)\cdot\tau^o=0,\\
	&	\left( \frac{1}{2^{n+1}}\widehat{L}_{2n}-\frac{1}{2^{n+1}}\Wa_{2n+3}+\frac{\partial^{n+1}}{\partial s^{n+1}}s-\frac{n+1}{4}\frac{\partial^{n}}{\partial s^{n}}\right)\cdot\tau^o=0,\quad n\geq 0.
\end{align*}
Furthermore, the tau-function is independent on even times, 
$$\frac{\partial}{\partial q_{2k}} \tau^o=0, \quad k\geq 1.$$

Let us consider the following tau-function:
\begin{equation*}
	\widetilde{\tau}=\left.\widetilde{Z}\right\vert_{q_{k}=\tr \Lambda^{-k}}, \quad \widetilde{Z}=c_{\Lambda,M}\det(\Lambda)\I_M^{(ext,s)},
\end{equation*}
where $\I_M^{(ext,s)}$ is defined in Eq.\eqref{IMe}. From the model Eq.\eqref{Zo4} for the tau-function $\tau^o$, since
\begin{align*}
	&\exp\left(-\trr \Lambda^{-1}\frac{\partial}{\partial \Lambda}\frac{\partial}{\partial s}\right)\cdot\frac{\det(\Lambda)}{\det(-H-s \id_{M})}\\
	=&\det\sqrt{\Lambda^2-2\partial_s}\cdot\frac{1}{\det(-H-s \id_{M})}\\
	=&\exp\left(\frac{1}{2}\trr\log\left(\id_M-2\Lambda^{-2}\partial_s\right)\right)\cdot\frac{\det(\Lambda)}{\det(-H-s \id_{M})},
\end{align*}
we can see that
\begin{equation}\label{oS}
	\tau^o=e^{-S}\cdot\widetilde{\tau},
\end{equation}
where, under the Miwa parametrization $q_{k}=\tr \Lambda^{-k}$,
\begin{equation*}
	-S=\exp\left(\frac{1}{2}\trr\log\left(\id_M-2\Lambda^{-2}\partial s\right)\right)=-\sum_{k=1}^{\infty}2^k\frac{q_{2k}}{2k}\frac{\partial^k}{\partial s^k}.
\end{equation*}
Using the commutator relation $[q_n,\widehat{L}_k]=-n\widehat{\alpha}_{k-n}$, we can compute the Virasoro constraints for $\widetilde{\tau}$ as
\begin{align*}
	&e^S\left( \widehat{L}_{-2}-\frac{\partial}{\partial q_{1}}+s\right)e^{-S}\\
	=&\widehat{L}_{-2}-\frac{\partial}{\partial q_{1}}-\sum_{k=1}^{\infty}2^kq_{2k+2}\frac{\partial^k}{\partial s^k}+s+\sum_{k=1}^{\infty}2^{k-1}q_{2k}\frac{\partial^{k-1}}{\partial s^{k-1}}\\
	=&\widehat{L}_{-2}-\frac{\partial}{\partial q_{1}}+s+q_2,
\end{align*}
and
\begin{align*}
	&e^S	\left( \frac{1}{2^{n+1}}\widehat{L}_{2n}-\frac{1}{2^{n+1}}\widehat{\alpha}_{2n+3}+\frac{\partial^{n+1}}{\partial s^{n+1}}s-\frac{n+1}{4}\frac{\partial^{n}}{\partial s^{n}}\right)e^{-S}\\
	=&\frac{1}{2^{n+1}}\left(\widehat{L}_{2n}-\sum_{k=1}^{n-1}2^{k}\widehat{\alpha}_{2n-2k}\frac{\partial^k}{\partial s^k}- \sum_{k=n+1}^{\infty}2^{k}q_{2k-2n}\frac{\partial^k}{\partial s^k} \right)\\
	&\quad\quad-\frac{1}{2^{n+1}}\widehat{\alpha}_{2n+3}+\frac{n-1}{4}\frac{\partial^{n}}{\partial s^{n}}+\frac{\partial^{n+1}}{\partial s^{n+1}}s+\sum_{k=1}^{\infty}2^{k-1}q_{2k}\frac{\partial^{n+k}}{\partial s^{n+k}}-\frac{n+1}{4}\frac{\partial^{n}}{\partial s^{n}}\\
	=&\frac{1}{2^{n+1}}\widehat{L}_{2n}-\frac{1}{2^{n+1}}\widehat{\alpha}_{2n+3}-\frac{1}{2^{n+1}}\sum_{k=1}^{n-1}2^{k}\widehat{\alpha}_{2n-2k}\frac{\partial^k}{\partial s^k}+\frac{\partial^{n+1}}{\partial s^{n+1}}s-\frac{1}{2}\frac{\partial^{n}}{\partial s^{n}}.
\end{align*}
In fact, under the parametrization
\begin{equation*}
	s=\trr\Lambda^{-2}=q_2, \quad t_k=\frac{1}{k}\trr\Lambda^{-k}=\frac{1}{k}q_k,
\end{equation*} 
the relation Eq.\eqref{oS} and the difference between the Virasoro constraints of $\tau^o$ and $\widetilde{\tau}$ agree with the results presented in Sect. 7 of \cite{AO2}.

\section{Further remarks on the extened refined open partition function}\label{S4}
 The expression of $I_N^{(C,s)}$ mentioned in the introduction is 
\begin{multline}\label{INCs}
	I_N^{(C,s)}=\frac{1}{(2\pi)^{N^2}}\int_{\M_{N}(\C)}[\d Z]\,\exp\left( -\frac{1}{2}\trr(Z\oZ^t)\right)\exp\left( \frac{1}{6}\trr(Z^3)\right)\\
	\det\frac{H\otimes\id_N+\sqrt{\Lambda^2\otimes \id_{N}-\id_M\otimes \oZ^t}-\id_M\otimes Z}{H\otimes\id_N+\sqrt{\Lambda^2\otimes \id_{N}-\id_M\otimes \oZ^t}+\id_M\otimes Z}\\
	\exp\left(  \sum_{i\geq 0}\frac{2^{-i-1}}{(i+1)!}s_i\trr(\oZ^t)^{i+1}\right).
\end{multline}
The conjectural matrix model for the extended refined open partition function introduced in \cite{ABT} is \begin{equation*}
	\tau_N^{o,ext}=c_{\Lambda,M}\left(-\frac{1}{3}\trr\Lambda^3\right)\det\Lambda^N\int_{\HH_{M}}[\d H]\,\exp\left( \frac{1}{6}\trr H^3\right)\exp\left( -\frac{1}{2}\trr H\Lambda^2\right)I_N^{(C)},
\end{equation*}
where $I_N^{(C)}$ is the complex matrix integral
\begin{equation}\label{INC}
	I_N^{(C)}=\frac{1}{(2\pi)^{N^2}}\int_{\M_{N}(\C)}[\d Z]\,\exp\left( -\frac{1}{2}\trr(Z\oZ^t)\right)F_N(Z,\oZ^t),
\end{equation}
and
\begin{multline}\label{FNZ}
	F_N(Z,\oZ^t)=\exp\left( \frac{1}{6}\trr(Z^3)\right)
	\det\frac{H\otimes\id_N+\sqrt{\Lambda^2\otimes \id_{N}-\id_M\otimes \oZ^t}-\id_M\otimes Z}{H\otimes\id_N+\sqrt{\Lambda^2\otimes \id_{N}-\id_M\otimes \oZ^t}+\id_M\otimes Z}\\
	\frac{1}{\det\sqrt{\Lambda^{2}\otimes \id_N-\id_M\otimes\oZ^t}}.
\end{multline}
Let the Schur decomposition of $Z$ be $Z=USU^{-1}$, where $U$ is unitary matrix, and $S$ is a complex upper triangular matrix. The complex matrix integral $I_N^{(C)}$ is invariant under the Schur decomposition, and therefore it can be reduced to
\begin{equation}\label{INC2}
	I_N^{(C)}=\frac{1}{(2\pi)^{\frac{N^2+N}{2}}}\int_{\S_{N}(\C)}[\d S]\,\exp\left( -\frac{1}{2}\trr(S\oS^t)\right)F_N(S,\oS^t),
\end{equation}
where $\S_{N}(\C)$ is the set of complex upper triangular matrix. 

For this matrix model, the technique in our previous proofs doesn't seem to work when $N\geq 2$. Let us consider the simplest case $M=1$ and $N=2$. Suppose the triangular complex matrices $S$ and $\oS^t$ are in the form
\begin{equation*}
	S=\begin{bmatrix}
		z_1 & s\\
		0 & z_2
	\end{bmatrix} \quad \mbox{and}\quad 
	\oS^t=\begin{bmatrix}
		\bar{z}_1 & 0\\
		\bar{s} & \bar{z}_2
	\end{bmatrix} .
\end{equation*}
The integral becomes
\begin{equation*}
	\int_{\R}\d h\,\exp\left( \frac{1}{6}h^3\right)\exp\left( -\frac{1}{2}h\lambda^2\right)I_2^{(C)},
\end{equation*}
where
\begin{equation}
	I_2^{(C)}=\frac{1}{(2\pi)^{3}}\int_{\C^3}[\d z_1]\,[\d z_2]\,[\d s]\,\exp\left( -\frac{1}{2}(z_1\bar{z}_1+z_2\bar{z}_2+s\bar{s})\right)F_2(S,\overline{S}^t),
\end{equation}
where
\begin{equation}
	F_2(S,\overline{S})=\exp\left( \frac{1}{6}\trr(S^3)\right)
	\det\frac{h\id_2+\sqrt{\lambda^2 \id_{2}- \overline{S}^t}- S}{h\id_2+\sqrt{\lambda^2 \id_{2}-\overline{S}^t}+ S}
	\frac{1}{\det\sqrt{\lambda^{2} \id_2-\overline{S}^t}}.
\end{equation}
By a straightforward computation, we have
\begin{align*}
	(\oS^t)^n&=\begin{bmatrix}
		\bar{z}_1^n & 0\\
		\bar{s}\frac{\bar{z}_1^n-\bar{z}_2^n}{\bar{z}_1-\bar{z}_2} & \bar{z}_2^n
	\end{bmatrix},\\
	\sqrt{\id_2-\lambda^{-2}\oS^t}&=\sum_{k=0}^{\infty}(-1)^k\lambda^{-2k}\binom{\frac{1}{2}}{k}(\oS^t)^k\\
	&=\begin{bmatrix}
		\sqrt{1-\lambda^{-2}\bar{z}_1} & 0\\
		-\frac{\lambda^{-1}\bar{s}}{\sqrt{\lambda^{2}-\bar{z}_1}+\sqrt{\lambda^{2}-\bar{z}_2}} &\sqrt{1-\lambda^{-2}\bar{z}_2}.
	\end{bmatrix}
\end{align*}
Note that $\bar{z}_1$ and $\bar{z}_2$ are independent variables. We can temporarily use 
\begin{equation*}
	\sqrt{\lambda_1^{2}-\bar{z}_1}  \quad \mbox{and}\quad \sqrt{\lambda_2^{2}-\bar{z}_2},
\end{equation*}
and consider the integral 
\begin{align*}
	&\int_{\R}\d h\,\exp\left( \frac{1}{6}h^3\right)\exp\left( -\frac{1}{2}h\lambda^2\right)\\
	&\exp\left(-\frac{1}{\lambda_1}\frac{\partial}{\partial \lambda_1}\frac{\partial}{\partial s_1}-\frac{1}{\lambda_2}\frac{\partial}{\partial \lambda_2}\frac{\partial}{\partial s_2}\right)\frac{\lambda_1+\lambda_2}{\lambda_1\lambda_2}\exp\left( \frac{1}{6}s_1^3\right)\exp\left( \frac{1}{6}s_2^3\right)\\
	& \frac{1}{2\pi}\int_{\C}[\d s]\,\frac{(h+\lambda_1-s_1)(h+\lambda_2-s_2)-s\bar{s}}{(h+\lambda_1+s_1)(h+\lambda_2+s_2)+s\bar{s}} e^{-\frac{1}{2}(\lambda_1+\lambda_2)s\bar{s}}.
\end{align*}
The complex integral above is related the special function - the exponential integral $Ei(x)$, which has a somewhat complicated formal expansion. After performing the Weierstrass transform on $s_1$ and $s_2$, we can have two variables $y_1$ and $y_2$ like the variable $y$ in Lemma \ref{ytoM+1}. However, the polynomial averaging procedure requires a fraction $(y_1-y_2)(y_1+y_2)^{-1}$ in the product part like Eq.\eqref{prod}, and there seems to be no way to obtain this term.

\section*{Acknowledgement}
The author is grateful to Alexander Alexandrov for the insightful discussions and guidance.

\section*{Appendix}
\subsection*{A.1 The Gaussian integral over complex numbers}
In the complex plane, we write
\begin{equation*}
	z=x+y\i, \quad \bar{z}=x-y\i.
\end{equation*}
Then, the differentials are expressed as
\begin{equation*}
	\d z=\d x+\i\,\d y, \quad \d\bar{z}=\d x-\i\,\d y.
\end{equation*}
We know that $x$ and $y$ are real-valued and independent to each other. However, the complex variables $z$ and $\bar{z}$ are related, hence they can not be considered as independent variables. Using the exterior product (wedge product), we have
\begin{equation*}
	\d \bar{z}\wedge \d z = -\, \d z \wedge\d\bar{z}=2\i \,\d x\wedge\d y.
\end{equation*}
Note that, as the area element $\d x\d y$ in the double integrals, 
\begin{equation*}
	\d x\d y=\d x\wedge\d y= \frac{\d\bar{z}\wedge \d z }{2\i}=[\d z].
\end{equation*}

The standard Gaussian integral over real line $\R$ is in the form:
\begin{equation*}
	\int_{-\infty}^{\infty}\d \mu(x)=1,\quad\mbox{where}\quad \d \mu(x)=\frac{1}{\sqrt{2\pi}}e^{\frac{x^2}{2}} \,\d x
\end{equation*}
is the Gaussian measure. Sometimes we also call the integrals over the Gaussian measure the Gaussian integrals. Furthermore, we refer to the following integral as the Gaussian integrals over complex numbers,
\begin{equation*}
	\int_{\C} f(z,\bar{z})e^{-\frac{1}{2}z\bar{z}}\, [\d z].
\end{equation*}
The integral is an area integral over the complex plane. The notation $f(z,\bar{z})$ means that $f$ is a non-holomorphic function in general. 

Let $u$ be a formal parameter, and $F(uz,\bar{z})$ be a formal power series. Then,
\begin{equation}\label{equal}
	\frac{1}{2\pi}\int_{\C}F(uz,\bar{z})e^{\frac{1}{2}s\bar{z}}e^{-\frac{1}{2}z\bar{z}}\, [\d z]=\left.\left( e^{2\frac{\partial^2}{\partial s\partial s_{-}}}\cdot F(us,s_{-})\right)\right\vert_{s_{-}=0}
\end{equation}
The parameter $u$ is to make sure that the formal integral is well-defined. To prove this equality, let $m$ and $n$ be non-negative integers, and $z=re^{\i\theta}$. Using polar coordinates $[\d z]=\d x\d y= r\,\d r\d \theta$, one can show that
\begin{align*}
	&\int_{\C} \bar{z}^mz^n e^{-\frac{1}{2}\bar{z}z}\,[\d z] \nonumber \\
	=&\int_0^{\infty}r^{m+n+1}e^{-\frac{1}{2}r^2}\d r\int_0^{2\pi} e^{\i(n-m)\theta}\d \theta\nonumber\\
	=&2\pi\delta_{mn}\int_0^{\infty}(\sqrt{2u})^{m+n}e^{-u}\d u,
\end{align*}
where $\delta_{mn}=0$ for $m\neq n$, and $\delta_{mn}=1$ for $m=n$. If $m=n$, the above integral is equal to $(2\pi) 2^nn!$. Now, for $m>n$, we have
\begin{equation*}
	\frac{1}{2\pi}\int_{\C}\bar{z}^mz^ne^{\frac{1}{2}s\bar{z}}e^{-\frac{1}{2}z\bar{z}}\, [\d z]=\left.\left( e^{2\frac{\partial^2}{\partial s\partial s_{-}}}\cdot s_{-}^ms^n\right)\right\vert_{s_{-}=0}=0.
\end{equation*}
When $m\leq n$, we have
\begin{equation*}
	\frac{1}{2\pi}\int_{\C}\bar{z}^mz^ne^{\frac{1}{2}s\bar{z}}e^{-\frac{1}{2}z\bar{z}}\, [\d z]
	=\frac{2^mn!}{(n-m)!}s^{n-m},
\end{equation*}
and
\begin{equation*}
	\left.\left( e^{2\frac{\partial^2}{\partial s\partial s_{-}}}\cdot s_{-}^ms^n\right)\right\vert_{s_{-}=0}=\frac{2^m}{m!}\left(\frac{\partial^2}{\partial s\partial s_{-}} \right)^m\cdot s_{-}^ms^n=\frac{2^mn!}{(n-m)!}s^{n-m}.
\end{equation*}
This shows that Eq.\eqref{equal} holds for any formal power series.

\subsection*{A.2 The Weierstrass transform}
We consider the {\bf heat equation} (we refer to \cite{W} for more details about the heat equation):
\begin{equation*}
	\frac{\partial^2}{\partial x^2}H=\frac{\partial}{\partial t}H.
\end{equation*}
Let 
\begin{equation*}
	H(x,t)=e^{t\frac{\partial^2}{\partial x^2}}\cdot F(x)=\sum_{k=0}^{\infty}\frac{F^{(2k)}(x)t^k}{k!}, \quad x\in\R, t\geq 0, 
\end{equation*}
where $F(x)=H(x,0)$ is the initial function. The function $H(x,t)$ is in fact the convolution of $H(x,0)$ as the following integral (also known as the {\bf Weierstrass transform}), 
\begin{equation}\label{WT}
	H(x,t)=\frac{1}{2\sqrt{\pi t}}\int_{-\infty}^{\infty}\exp\left( -\frac{(x-y)^2}{4t}\right)F(y) \,\d y.
\end{equation}

Let us also consider the formal Weierstrass transform. Assume $s\in \R$ and $t>0$.  By the definition \eqref{WT}, on monomials, we have
\begin{align*}
	e^{t\frac{\partial^2}{\partial s^2}}\cdot s^n&=\frac{1}{2\sqrt{\pi t}}\int_{-\infty}^{\infty}\exp\left( -\frac{x^2}{4t}\right)(x+s)^n \d x\\
	=&\frac{1}{\sqrt{2\pi }}\int_{-\infty}^{\infty}\exp\left( -\frac{x^2}{2}\right)(\sqrt{2t}x+s)^n \d x.
\end{align*}
Furthermore, for a power series $F(s)$,
\begin{align*}
	e^{t\frac{\partial^2}{\partial s^2}}\cdot F(s)
	&=\frac{1}{\sqrt{2\pi }}\int_{-\infty}^{\infty}\exp\left( -\frac{x^2}{2}\right)F(\sqrt{2t}x+s)\, \d x.
\end{align*}
We can define the inverse Weierstrass transform to be
\begin{align}
	e^{-t\frac{\partial^2}{\partial s^2}}\cdot F(s)&=\frac{1}{2\sqrt{\pi t}}\int_{-\infty}^{\infty}\exp\left( -\frac{y^2}{4t}\right)F(\i y+s)\,  \d y\nonumber\\
	&=\frac{1}{2\sqrt{\pi t}}\int_{-\infty}^{\infty}\exp\left( -\frac{y^2}{4t}\right)F(-\i y+s)\,  \d y\label{invW}.
\end{align}
In this way, we have
\begin{align*}
	e^{-t\frac{\partial^2}{\partial s^2}}e^{t\frac{\partial^2}{\partial s^2}}\cdot F(s)
	=&\frac{1}{2\pi}\int_{-\infty}^{\infty}\int_{-\infty}^{\infty}\exp\left( -\frac{x^2+y^2}{2}\right)F(\sqrt{2t}x+\sqrt{-2t}y+s)\d x\d y\\
	=&\frac{1}{2\pi}\int_{\C} e^{-\frac{1}{2}z\bar{z}}F(\sqrt{2}z+s)\, [\d z]\\
	=& F(s).
\end{align*}
We can also deduce the above equality using the  delta function as
\begin{align*}
	e^{-t\frac{\partial^2}{\partial s^2}}e^{t\frac{\partial^2}{\partial s^2}}\cdot F(s)
	=&\frac{1}{2\sqrt{\pi t}}e^{-t\frac{\partial^2}{\partial s^2}}\int_{-\infty}^{\infty}\exp\left( -\frac{(x-s)^2}{4t}\right)F(x) \,\d x\\
	=&\frac{1}{4\pi t}\int_{-\infty}^{\infty}\int_{-\infty}^{\infty}\exp\left( -\frac{y^2}{4t}\right)\exp\left( -\frac{(x-s-\i y)^2}{4t}\right)F(x)\,\d x\d y\\
	=&\frac{1}{4\pi t}\int_{-\infty}^{\infty}\exp\left( -\frac{(x-s)^2}{4t}\right)F(x)\,\d x\int_{-\infty}^{\infty}\exp\left( \i y\frac{x-s}{2t}\right)\d y\\
	=&\frac{1}{2 t}\int_{-\infty}^{\infty}\delta\left(\frac{x-s}{2t}\right)\exp\left( -\frac{(x-s)^2}{4t}\right)F(x)\,\d x\\
	=& \int_{-\infty}^{\infty}\delta\left(x\right)\exp\left( -tx\right)F(2tx+s)\,\d x\\
	=&F(s).
\end{align*}
Here we used the standard Fourier transform
\begin{equation*}
	F(y)=\frac{1}{\sqrt{2\pi}}\int_{-\infty}^{\infty} f(x)e^{\i xy}\d x.
\end{equation*} 
If $f(x)=1$, then 
\begin{equation*}
	\int_{-\infty}^{\infty} e^{\i xy}\d x=2\pi\delta(y),
\end{equation*}
where $\delta$ is the Dirac's delta function with the fundamental property 
\begin{equation*}
	\int_{-\infty}^{\infty} g(y)\delta(y-a)\d y=g(a).
\end{equation*}

	\vspace{10pt} \noindent
	\\
	\footnotesize{\sc gehao wang }\\
	Department of Mathematics,\\
	College of Information Science and Technology/College of Cyberspace Security,\\
	Jinan University, Guangzhou, China. \\
	\footnotesize{E-mail address:  gehao\_wang@hotmail.com}
	

\begin{thebibliography}{99}

	\bibitem{AO1} A. Alexandrov, {\it Open intersection numbers, matrix models and MKP hierarchy}, J. High Energ. Phys. 2015, 42 (2015). https://doi.org/10.1007/JHEP03(2015)042.
	
	\bibitem{AO2} A. Alexandrov, {\it Open intersection numbers, Kontsevich-Penner model and cut-and-join operators}, J. High Energ. Phys. 2015, 28 (2015). https://doi.org/10.1007/JHEP08(2015)028.
	
	\bibitem{ABT} A. Alexandrov, A. Buryak, and R.J. Tessler,  {\it Refined open intersection numbers and the Kontsevich-Penner matrix model}. J. High Energ. Phys. 2017, 123 (2017). https://doi.org/10.1007/JHEP03(2017)123, arXiv:1702.02319 [math-ph].


\bibitem{B1} A. Buryak. {\it Equivalence of the open KdV and the open Virasoro equations for the moduli space of Riemann surfaces with boundary}. Letters in Mathematical Physics 105 (2015), no. 10, 1427-1448.

\bibitem{B2} A. Buryak. {\it Open intersection numbers and the wave function of the KdV hierarchy}. Moscow Mathematical Journal 16 (2016), no. 1, 27-44.
	
\bibitem{BR} M. Bertola and G. Ruzza, {\it The Kontsevich–Penner Matrix Integral, Isomonodromic Tau Functions and Open Intersection Numbers}, Annales Henri Poincare 20 (2019) 2, 393-443.

	\bibitem{BCT1} A. Buryak, E. Clader, R. J. Tessler, {\it Open r-Spin Theory I: Foundations}, International Mathematics Research Notices, Volume 2022, Issue 14, July 2022, Pages 10458–10532.
	
	\bibitem{BCT2} A. Buryak, E. Clader, R. J. Tessler, {\it Open r-spin theory II: The analogue of Witten's conjecture for r-spin disks}, J.Diff.Geom. 128 (2024) 1, 1-75.
	
\bibitem{BT} A. Buryak and R. J. Tessler.{\it Matrix models and a proof of the open analog of Witten’s conjecture}. Comm. Math. Phys., 353(3):1299-1328, 201.
	
	
	
	\bibitem{BTT} A. Buryak, R.J. Tessler, M. Troshkin, {\it Open Hurwitz numbers and the mKP hierarchy},	arXiv:2510.08038 [math-ph].
	
	\bibitem{DW} R. Dijkgraaf, E. Witten, {\it Developments in topological gravity}, International Journal of Modern Physics A, vol 33, no. 30, 1830029, (2018), 	arXiv:1804.03275 [hep-th].
	
	\bibitem{IZ} C. Itzykson, J.-B. Zuber, {\it The planar approximation. II}. J. Math. Phys. 21, 411–421(1980).
	
	\bibitem{HC} Harish-Chandra, {\it Differential operators on a semisimple Lie algebra}. Am. J. Math. 79, 87–120 (1957).
	
	\bibitem{K}   M. Kontsevich, {\it Intersection theory on the moduli space of curves and the matrix Airy
		function}, Commun. Math. Phys. 147 (1992) 1-23.
	
	\bibitem{PST} R. Pandharipande, J. P. Solomon and R. J. Tessler, {\it Intersection theory on moduli of disks, open KdV and Virasoro}, Geom. Topol. 28 (2024) 6, 2483-2567, arXiv:1409.2191 [math.SG].

\bibitem{SB1} B. Safnuk, {\it Topological recursion for open intersection numbers}, 	arXiv:1601.04049 [math-ph].

\bibitem{SB} B. Safnuk, {\it Combinatorial models for moduli spaces of open Riemann surfaces}, arXiv:1609.07226 [math.SG].

\bibitem{ST} J. P. Solomon, R. J. Tessler, {\it Intersection theory on the moduli space of graded Riemann surfaces with boundary}, in preparation.

\bibitem{T}	R. J. Tessler, {\it The combinatorial formula for open gravitational descendents}. Geom. Topol. 27 (2023) 2497-2648, arXiv:1507.04951 [math-ph].

\bibitem{W} DV. Widder, {\it The Heat Equation}. Academic Press, NY, 1976.

\bibitem{WE} E. Witten, {\it Two-dimensional gravity and intersection theory on moduli space}, Surveys Diff. Geom. 1 (1991) 243.

\end{thebibliography}
\end{document}